\documentclass[reqno,12pt,letterpaper]{amsart}
\usepackage{amsmath,amssymb,amsthm,graphicx,mathrsfs,url,mathabx}
\usepackage[usenames,dvipsnames]{color}
\usepackage[colorlinks=true,linkcolor=Red,citecolor=Green]{hyperref}
\usepackage{amsxtra}
\usepackage{subcaption}
\usepackage{wasysym}
\usepackage{tikz-cd}
\usepackage{array,enumerate}
\usepackage[capitalise]{cleveref}
\crefname{prop}{Proposition}{Propositions}

\let\Re=\Real

\usepackage{graphicx,color}

\usepackage{array}
\newcolumntype{P}[1]{>{\centering\arraybackslash}m{#1}}

\def\wrtext#1{\relax\ifmmode{\leavevmode\hbox{#1}}\else{#1}\fi}

\newcolumntype{L}{>{$}l<{$}}

\def\?[#1]{\textbf{[#1]}\marginpar{\Large{\textbf{??}}}}

\let\epsilon=\varepsilon 
\let\phi=\varphi

\newtheorem{thm}{Theorem}

\numberwithin{equation}{section}

\theoremstyle{definition}

\usepackage{braket}

\crefname{lem}{Lemma}{Lemmas}
\Crefname{lem}{Lemma}{Lemmas}

\usepackage{xcolor}
\definecolor{purp}{RGB}{160, 32, 240}

\newcommand{\cEH}{E_{\rm H}}
\newcommand{\cEX}{E_{\rm X}}
\newcommand{\cEHF}{E_{\rm HF}}

\newcommand{\Ran}{\operatorname{Ran}}

\newcommand{\tr}{\operatorname{tr}}
\newcommand{\Id}{I}

\usepackage[margin=1in]{geometry}

\title[Hartree--Fock coercivity for TBG]{Hartree--Fock coercivity for twisted bilayer graphene}
\author{Kevin D. Stubbs}

\author{Maciej Zworski}

\begin{document}

\maketitle

\begin{abstract}
We study
 the minimisers of the Hartree--Fock functional for flat bands in the chiral model of 
twisted bilayer graphene \cite{magic}. Our model is a continuous version of the discrete 
model considered by Becker, Lin and Stubbs \cite{BLS25,SBL25}  and the first result is a direct 
analogue of main theorems of those papers: at half filling of the flat band of multiplicity two, there are exactly two
minimisers, obtained by occupying either one of the two sub-bands.
(A more general version of this result, stressing its topological origins, was presented in \cite[\S A.8]{App}.) The new contribution is the {\em strict} coercivity of the Hessian of the Hartree--Fock functional at the two minimisers. It is a consequence of the difference between the Chern numbers of the component bands and can be optimistically considered as a soft version of a ``spectral gap''.
\end{abstract}

\section{Introduction}
\label{s:intr}

The purpose of this note is to investigate a Hartree--Fock functional in the continuous model of 
interacting electrons in twisted bilayer graphene (TBG).
The non-interacting part is based on Tarnopolsky, Kruchkov and Vishwanath's \cite{magic} chiral limit  of the Bistritzer--MacDonald model \cite{BM},  see \S \ref{s:rev} for a detailed review and references to mathematical results.
The interacting model is the continuous version of the discrete model studied by Becker, Lin and Stubbs \cite{BLS25,SBL25}.
In these works, the discreteness came from considering supercells (large tori) and the consequent discretisation of quasi-momenta.
Here we follow \cite{App} and work on $ \mathbb R^2 $ (rather than supercells) with continuous quasi-momenta.
We refer to that appendix for a detailed presentation of this theory. 

\begin{figure}
\includegraphics[width=17cm]{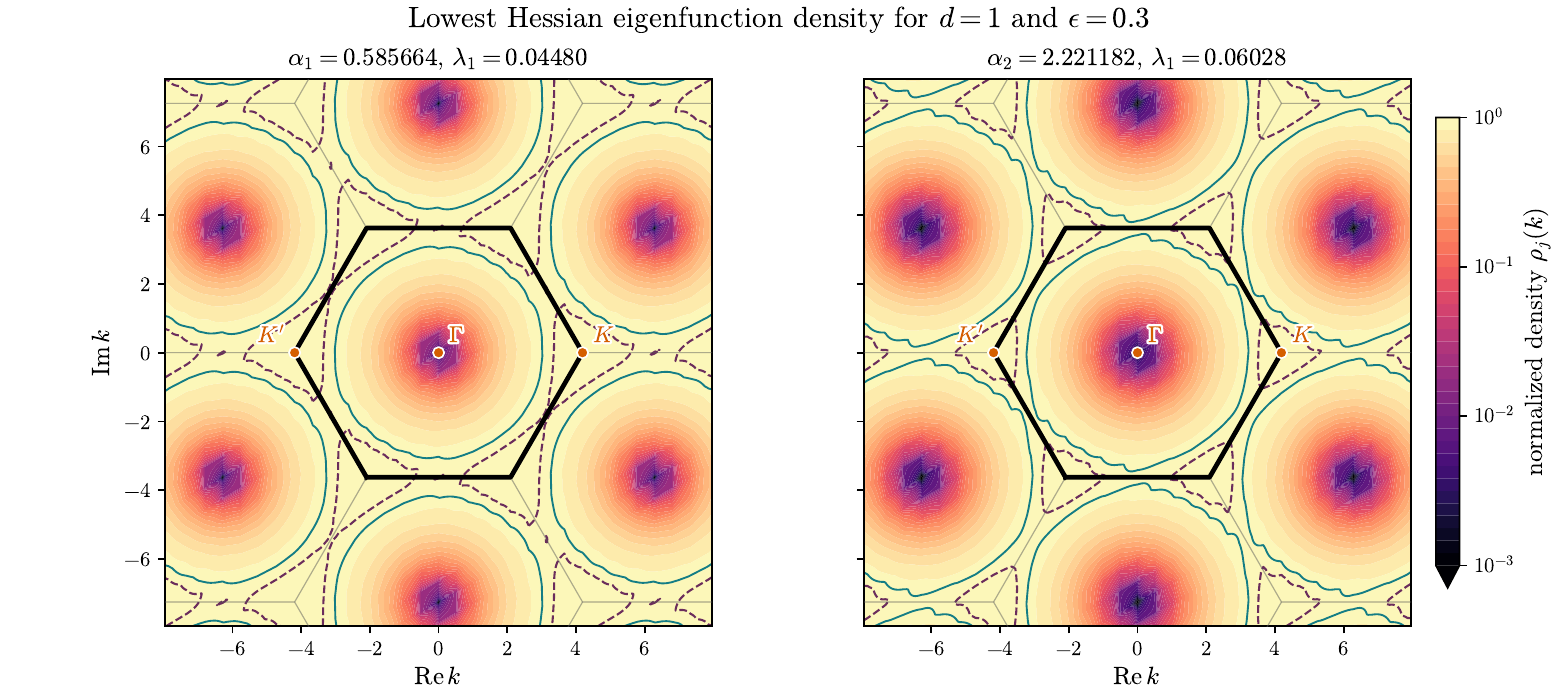}
\caption{Numerically computed lowest eigenfunctions, $ \psi_1 $ (with the eigenvalues $ \lambda_1 $ determining the constants in \eqref{eq:Hess1}) for the first two magic angles of \cite{magic}:
the plot shows the log of the normalised density,
$ \rho_j(k):={|\psi_{1}(k)|^2}/
{\max_q|\psi_{1}(q)|^2}, $ 
for $ \alpha_j $, $ j =1,2 $. 
The parameters in \eqref{eq:Vhat} are $d=1$ and $\epsilon=0.3$. The vanishing at $\Gamma = 0 $ follows from basic symmetries. This figure and Figure~\ref{fig:bdry} were produced, on request, by ChatGPT 5.6.}
\label{fig:hessian-modes}
\end{figure}

We now briefly describe the resulting Hartree--Fock problem.  At a simple magic
angle (see \eqref{eq:mmagic}), the Hamiltonian $H(\alpha)$ reviewed in \S\ref{s:rev} has a flat band of multiplicity two, described by a rank-two Bloch bundle $E=E_+\oplus E_-$.
To study the behavior of this system, we take the strong coupling approach and project the electrons-electron interactions to the flat bands (see Ledwith, Khalaf, Vishwanath \cite{LKV21} for a review).
We use the average subtraction scheme which
determines a one-body subtraction term $P_{\rm sub}$ -- see  \cite[(A.8.10)]{App}.
Thus the effective one-body Hamiltonian and the corresponding many-electron Hamiltonian
are, respectively,
\[
 P:=H(\alpha)-P_{\rm sub},\qquad
 \widehat P+\mathbb V
 =\widehat H(\alpha)+\mathbb V-\widehat P_{\rm sub},
\]
where $ \mathbb V $ is the second quantisation of the interaction term described in 
\cite[\S A.6,\S A.7]{App}.  (We avoid
$ \widehat V $ used there, not to confuse $ \mathbb V $ with the Fourier transform of $ V$.)
The subtraction $\widehat{P}_{\rm sub}$ is meant to remove interactions already incorporated in the
one-body Hamiltonian.
While minimizing the many-electron Hamiltonian is a formidable problem, it was observed in the physics literature -- see Vafek, Kang \cite{KV19}, Bultinck, Khalaf, Liu, Chatterjee, Vishwanath, and Zaletel \cite{Bultinck2020GroundStateHidden}, Bernevig, Song, Regnault, and Lian \cite{BSRL21} -- that the choice of average subtraction produces mathematically clean results.
We refer to \cite[\S A.8, (8.8)--(8.10)]{App} for its
mathematical formulation and to the references above for its motivation and use in interacting
models of TBG. 

A standard physical example of the interaction is the
double-gate screened Coulomb interaction, whose Fourier transform is
\begin{equation}
\label{eq:Vhat}
 \widehat V(\eta)=(2\pi/{\epsilon})
       {\tanh(\tfrac12d|\eta|)}/{|\eta|},
       \ \ \  \epsilon,d>0,
\end{equation}
which is bounded at $\eta=0$ and is $\mathcal O(|\eta|^{-1})$ at infinity; see
\cite[(2.23)]{BLS25} and \cite[(A.8.26)]{App}.

For a periodic one-particle density matrix $\gamma$, the Hartree--Fock energy
per fundamental cell is 
\[
 E_{\rm HF}^\Lambda(\gamma)
 =\tr_\Lambda(P\gamma)
 +\frac12\int_\Omega\int_{\mathbb R^2}V(x-y)
 \left(\rho_\gamma(x)\rho_\gamma(y)
 -\tr\big(\gamma(x,y)\gamma(y,x)\big)\right)\,dy\,dx,
\]
where $ \rho_\gamma(x):=\tr\gamma(x,x)$, $\Omega$ is a fundamental cell of $\Lambda$, and $\tr_\Lambda$ denotes
trace per cell (see \cite[(8.1)--(8.3)]{App}).  After restriction to the flat
band and the above subtraction, this becomes, up to irrelevant constants,
the functional $E_{\rm H}(\Pi)+E_{\rm X}(\Pi)$ described in
\eqref{eq:HFE}.

\begin{figure}
\includegraphics[width=16cm]{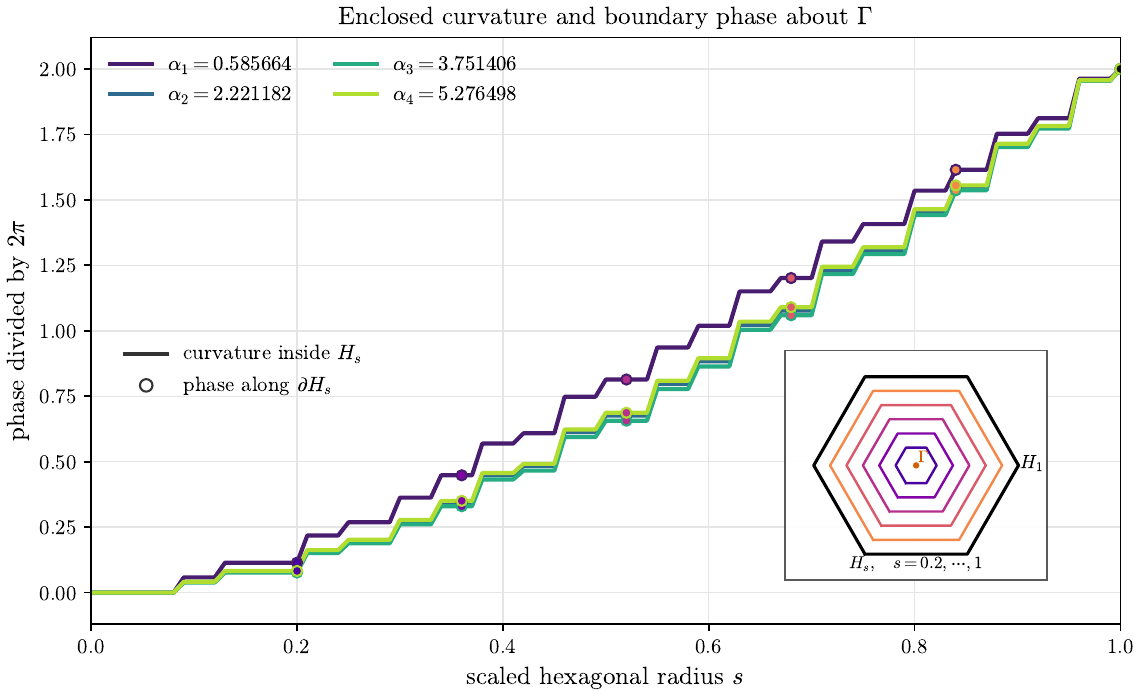}
\caption{Curvature of the tangent line bundle at $ \Pi_+$, $E_-\otimes E_+^*$ ($ \varphi $ in \eqref{eq:Hess1} are essentially sections of it), equipped with its natural connection, for the first four positive magic parameters. Writing $H_s=sH_1$  where $H_1$ is the fundamental hexagon centred at $\Gamma$, the solid curves show the curvature enclosed by $H_s$ (divided by $2\pi$), and the circles,  the corresponding continuously lifted phase along $\partial H_s$.  Since
$
c_1(E_-\otimes E_+^*)=2,
$
all four profiles have the common endpoint $ 2 $. Their intermediate shapes do however depend visibly on  $\alpha$, with the first magic parameter being the most distinct.
}
\label{fig:bdry}
\end{figure}

The flat band has rank two, and a translation-invariant Slater state is
described in Bloch variables by a measurable field of orthogonal projections
$\Pi(k):E_k\to E_k$.  We impose an average half-filling condition:
\[
 \Pi(k)^2=\Pi(k)=\Pi(k)^*,\qquad
 |\mathbb C/\Lambda^*|^{-1}
 \int_{\mathbb C/\Lambda^*}\tr\Pi(k)\,dk=1.
\]
The usual uniform half-filling condition $\tr\Pi(k)=1$ almost everywhere is a
special case.  Allowing the more general condition above is useful in the
continuous setting and is discussed further in \S\ref{s:HFmin}; see also
\cite[\S A.8]{App} and \cite{SBL25}.

Our first result, stated precisely as Theorem~\ref{t:1}, says that the only
minimisers in this larger class are
\[
                         \Pi=\Pi_+\qquad\hbox{and}\qquad\Pi=\Pi_-,
\]
the projections obtained by filling one of the two component flat bands.  In
particular, the minimisers are uniformly half-filled even though uniformity is
not assumed.  This is the continuous counterpart of the results of
\cite{BLS25,SBL25}. Its topological version in \cite[Theorem~8.1]{App} applies
more generally to a transfer-invariant splitting into line bundles with
different Chern numbers.

The new result of this note concerns stability of these minimisers.  On the
Hilbert manifold of rank-one projection fields, Theorem~\ref{t:2} gives, at
either minimiser, the strict coercive estimate
\begin{equation}
 \operatorname{Hess}_{\Pi_\pm}E_{\rm HF}(\varphi,\varphi)
 \geq c \int_{\mathbb C/\Lambda^*}|\varphi(k)|^2\,dk,
 \qquad c >0.
\label{eq:Hess1}
\end{equation}
The mechanism is topological: the component line bundles have different Chern
numbers, which rules out a nonzero null direction for the Hessian.  This can be
viewed as a mean-field, or ``soft'', analogue of a spectral gap, though we insist that it is
not a statement about the spectral gap of the many-body Hamiltonian. 

Figure~\ref{fig:hessian-modes} illustrates \eqref{eq:Hess1} numerically. It displays the lowest Hessian mode for the first two magic parameters: its eigenvalue, $ \lambda_1 $, measures the weakest neutral variation away from the minimiser, while its density shows where that variation is concentrated in quasi-momentum space. Figure \ref{fig:bdry} displays the curvature of the tangent line bundle (the space where $ \varphi $ in \eqref{eq:Hess1} live) for the first four magic parameters, separating its topologically fixed total from its alpha dependent distribution.

We conclude this introduction with additional pointers to the literature.
The positivity of the constrained Hartree--Fock Hessian is the
usual stability condition for a Hartree--Fock state.  In physics, Thouless
\cite{Th60} related this condition to the stability of collective modes in the
random-phase approximation.  Quantitative coercivity estimates appear to be
much less common, and rigorous many-body gap results derived from such
estimates seem rarer still.  In the reduced Hartree--Fock model for defects in an
insulating crystal, Canc\`es, Deleurence and Lewin \cite{CDL08} obtained a global
coercive bound from convexity and an assumed one-particle band gap.  For the
full Hartree--Fock functional of a weakly interacting Fermi gas on a finite
torus, Benedikter, Nam, Porta, Schlein and Seiringer
\cite[Appendix~A]{BNPSS21} proved a global coercive bound using a discrete
kinetic gap.  In the present flat-band problem the projected one-body energy is
constant and supplies no such mechanism: the absence of zero modes, and hence
coercivity, follows instead from the mismatch of the Chern numbers of the two
component bands.  To our knowledge, this topological mechanism for
Hartree--Fock stability is new.  Different notions of gap should also be kept
separate.  Bach, Lieb, Loss and Solovej 
\cite{BLLS94} proved strict separation
of occupied and unoccupied levels of the self-consistent Fock operator for
finite systems, while \cite{BLS25} establishes a charge gap for adding or
removing an electron in the discrete flat-band model.  Instead, the result here 
controls neutral orbital variations at fixed filling and does not by itself
give a spectral gap above the ground-state sector of the many-electron
Hamiltonian.

\medskip
\noindent {\sc Acknowledgements.}
We gratefully acknowledge partial support from the Simons
Foundation through Targeted Grant Award No.~896630, ``Moir\'e Materials
Magic''.
We also thank ChatGPT 5.6 for its advanced editorial and
mathematical assistance and for facilitating the literature review.

\section{Review of the chiral model of TBG}
\label{s:rev}

In this section we review 
the chiral Bistritzer--MacDonald Hamiltonian \cite{BM} studied in
\cite{magic}. For mathematical accounts of its properties see  \cite[\S 2]{Zw24} and \cite[\S 2.1]{BeZw24}
and references given there.  To introduce the model we consider the lattice governing its symmetries:
\[  \Lambda: =\mathbb Z\oplus\omega\mathbb Z, \ \ \ \omega: =e^{2\pi i/3}, \ \ \ 
 \Lambda^*=\frac{4\pi i}{\sqrt 3}\Lambda, \ \ \ K=\tfrac43\pi, \ \ \
\langle z,w\rangle:=\operatorname{Re}(z\bar w).
\]
Here $ \Lambda^* $ is the dual (reciprocal) lattice and $ K $ is a point of high symmetry:
$ \omega K \equiv K \mod \Lambda^* $. 
Let $\Omega^*$ be a fundamental cell of $\Lambda^*$ .

In the coordinates of
\cite[(2.2)--(2.3)]{Zw24} (see \S 2.1 there for the relation to other coordinate descriptions),  the Hamiltonian
$H(\alpha):H^1(\mathbb C;\mathbb C^4)\to L^2(\mathbb C;\mathbb C^4)$ is given by 
\begin{equation}
 H(\alpha):=
 \begin{pmatrix}0&D(\alpha)^*\\ D(\alpha)&0\end{pmatrix},
\ \ \ \ \ 
 D(\alpha):=
 \begin{pmatrix}
 2D_{\bar z}&\alpha U(z)\\
 \alpha U(-z)&2D_{\bar z}
 \end{pmatrix},
 \label{eq:chiralHamiltonian}
\end{equation}
where $ 2 D_{\bar z } =( \partial_{x_1} + i \partial_{x_2})/i  $, $ z = x_1 + i x_2 $. 

The dimensionless constant $\alpha\in\mathbb C$ is proportional to the inverse twisting angle and the
potential, modelling tunnelling $AB'/BA'$ sites is assumed to satisfy
\begin{equation}
 U(z+\gamma)=e^{i\langle\gamma,K\rangle}U(z),\qquad
 U(\omega z)=\omega U(z),\qquad
 \overline{U(\bar z)}=-U(-z),\qquad \gamma\in\Lambda.
 \label{eq:Usymmetries}
\end{equation}
The specific Bistritzer--MacDonald potential \cite{BM} in this normalisation is given by 
\begin{equation}
 U_{\rm BM}(z)=-\frac{4\pi i}{3}
 \sum_{\ell=0}^2\omega^\ell e^{i\langle z,\omega^\ell K\rangle}.
 \label{eq:UBM}
\end{equation}

We next recall the twisted periodicity and the modified Bloch transform -- see 
\cite[\S 2.1]{BeZw24} and \cite[\S 3.1]{Zw24}. For $u\in L^2_{\rm loc}(\mathbb C;\mathbb C^2)$ and
$\gamma\in\Lambda$, define
\begin{equation}
 L_\gamma u(z):=
 \begin{pmatrix}
 e^{i\langle\gamma,K\rangle}&0\\
 0&e^{-i\langle\gamma,K\rangle}
 \end{pmatrix}u(z+\gamma),
 \qquad
 \mathscr L_\gamma:=
 \begin{pmatrix}L_\gamma&0\\0&L_\gamma\end{pmatrix}
 \label{eq:twistedtranslations}
\end{equation}
on $L^2_{\rm loc}(\mathbb C;\mathbb C^4)$.  We use the twisted periodic spaces
\begin{equation}
\begin{split}
 L^2_k(\mathbb C;\mathbb C^2)
 &: =\{u\in L^2_{\rm loc}(\mathbb C;\mathbb C^2):
 L_\gamma u=e^{i\langle k,\gamma\rangle}u,\ \gamma\in\Lambda\},\\
 L^2_k(\mathbb C;\mathbb C^4)
 &: =\{u\in L^2_{\rm loc}(\mathbb C;\mathbb C^4):
 \mathscr L_\gamma u=e^{i\langle k,\gamma\rangle}u,\ \gamma\in\Lambda\},
 \end{split}
\label{eq:twistedspaces}
\end{equation}
with norms taken over a fundamental cell of $\Lambda$.  We also put 
$ H^1_k:=H^1_{\rm loc}\cap L^2_k $.  These spaces depend only on
$[k]\in\mathbb C/\Lambda^*$.

The modified Bloch transform is given, initially for Schwartz functions, by
\begin{equation}
 {\mathcal B}f(k,z):=|\mathbb C/\Lambda^*|^{-1/2}
 \sum_{\gamma\in\Lambda}e^{i\langle z+\gamma,k\rangle}
 \mathscr L_\gamma f(z).
 \label{eq:Blochunitary}
\end{equation}
It satisfies
\begin{equation}
 \mathscr L_\gamma {\mathcal B}f(k,\bullet)={\mathcal B}f(k,\bullet),
 \qquad
 {\mathcal B}f(k+p,z)=e^{i\langle z,p\rangle}{\mathcal B}f(k,z),
 \qquad p\in\Lambda^*.
 \label{eq:transs}
\end{equation}
As in \cite[\S 5.2]{TaZw}, ${\mathcal B}$ extends to a unitary map
\[
 {\mathcal B}:L^2(\mathbb C;\mathbb C^4)
 \longrightarrow L^2(\mathbb C/\Lambda^*;V),
\]
where the parent bundle (see \cite[\S 9.1]{TaZw},\cite[\S A.7]{App}) is
\begin{equation}
 \label{eq:parent}
\begin{gathered}
 V:=\bigl(\mathbb C\times L^2_0(\mathbb C;\mathbb C^4)\bigr)/\!\sim,
 \ \ \ \
 (k,u)\sim(k+p,\tau(p)u),
 \\
 [\tau(p)u](z):=e^{i\langle z,p\rangle}u(z), \ \ \ p\in\Lambda^*.
 \end{gathered}
\end{equation}
The chiral grading gives
\[
 V=V_+\oplus V_-,\qquad
 V_\pm:=\bigl(\mathbb C\times L^2_0(\mathbb C;\mathbb C^2)\bigr)/\!\sim.
\]

On the modified Bloch side we consider $ H_k ( \alpha ) \mathcal B u ( z, k ) =  ( \mathcal B H ( \alpha ) u )( z, k) $, 
where 
\begin{equation}
 H_k(\alpha)=
 \begin{pmatrix}
 0&D(\alpha)^*+\bar k\\
 D(\alpha)+k&0
 \end{pmatrix},
 \qquad
 D(\alpha)+k:H^1_0(\mathbb C;\mathbb C^2)
 \longrightarrow L^2_0(\mathbb C;\mathbb C^2).
 \label{eq:chiralFibers}
\end{equation}
The eigenvalues of $H_k(\alpha)$ are symmetric with respect to the origin and we write them as
\[  \cdots \leq - E_2 ( \alpha, k ) \leq - E_1 ( \alpha, k ) \leq  0\leq E_1(\alpha,k)\leq E_2(\alpha,k)\leq\cdots. 
\]
A parameter $\alpha$ is \emph{magic} if $E_1(\alpha,k)=0$ for every $k$, and
its chiral multiplicity is
\begin{equation}
 m(\alpha):=\min\{j>0:\max_{k\in\mathbb C/\Lambda^*}
 E_{j+1}(\alpha,k)>0\}.
 \label{eq:mmagic}
\end{equation}
(We remark that $ E_{1} ( \pm K, \alpha ) = 0 $ for all $ \alpha $ and the positivity condition can be replaced by 
the existence of $ k \neq \pm K + \Lambda^* $ such that $ E_{j+1} ( \alpha, k ) = 0 $ -- see \cite[Theorem 1]{BHZ25}.)

For a magic $\alpha$, \cite[Theorem~1]{BHZ26} shows that, for every $k$,
\begin{equation*}
 m(\alpha)=\dim\ker(D(\alpha)+k),\qquad
 \dim\ker H_k(\alpha)=2m(\alpha).
\end{equation*}
Thus a magic angle is called \emph{simple} when $m(\alpha)=1$. In that case,  the corresponding
flat band of the full Hamiltonian has multiplicity two.  Magic angle
multiplicities and their restrictions are studied in \cite{BHZ26,IN25}.
For the potential \eqref{eq:UBM}, Watson and Luskin \cite{WaLu21} proved the
existence of a real magic parameter in $(0.57,0.61)$.  Becker, Humbert and
Zworski \cite[Theorem~3]{BHZ23} gave a different proof, locating the smallest
positive magic parameter in $(0.583,0.589)$ and proving that it is {\em simple}.
In particular, the simple-magic-angle hypothesis used below is non-empty.

At a simple magic angle $\alpha$, in the notation of \cite{BeZw24},
\begin{equation}
 \ker_{H_0^1} \bigl(D(\alpha)+k \bigr)=\mathbb C u(k),
 \qquad
 \ker_{H^1_0} \bigl(D(\alpha)^*+\bar k \bigr)=\mathbb C u^*(k).
 \label{eq:TBGkernels}
\end{equation}
These kernels define smooth line subbundles $E_+\subset V_+$ and $E_-\subset V_-$,
with orthogonal projections
\begin{equation}
 \Pi_\pm(k):V_{\pm,k}\longrightarrow E_{\pm,k}.
 \label{eq:defPEpmk}
\end{equation}
We also put
\begin{equation}
 \Pi_E(k):V_k\longrightarrow E_k:=E_{+,k}\oplus E_{-,k},\ \,\ 
 \Pi_E(k)=\begin{pmatrix}\Pi_+(k)&\ \,0\\ \ \,0&\Pi_-(k)\end{pmatrix}.
 \label{eq:defPk}
\end{equation}
The splitting $E=E_+\oplus E_-$ is canonical.  Moreover,
\begin{equation}
 c_1(E_+)=-1,
 \qquad
 c_1(E_-)=1;
 \label{eq:TBGchern}
\end{equation}
see \cite[Theorem~4]{BHZ25}.  Hence $c_1(E)=0$, and $E$ is a trivial rank-two
bundle over $\mathbb C/\Lambda^*$, although we do not choose a trivialisation until
Section~\ref{s:scHF}.

The chiral and antiunitary symmetries,  respectively (see \cite[(2.9),(2.10)]{BeZw24}),  give
\begin{equation}
 u^*(k)=\mathscr Q u(k),
 \qquad
 u(-k)=\rho(k)\mathscr H u(k),
 \qquad |\rho(k)|=1.
 \label{eq:TBGsymmetries}
\end{equation}
In particular, if
$M_pv(z)=e^{i\langle p,z\rangle}v(z)$ and
$A_k(p):=\langle u(k), M_pu(k)\rangle$, then
\begin{equation}
 A_{-k}(p)=\overline{A_k(p)}.
 \label{eq:Asymmetry}
\end{equation}
We will use \eqref{eq:TBGchern} in the topological part of the proofs below and
\eqref{eq:Asymmetry} to verify a neutrality identity \eqref{eq:neutrality-new} needed to 
establish existence and uniqueness of minimisers (see Theorem \ref{t:1}).

\section{Many body interaction projected to the flat band}
\label{s:flat}

\subsection{Interaction projected to the flat band} 
\label{s:intp} 
We recall from \cite[\S A.7]{App} that the interaction for the $ N$-body system is 
 described using the multiplication operator $M_\eta$, 
\begin{equation}
 \label{eq:s7density}
 M_\eta u ( x ) := e^{ i \langle x, \eta \rangle } u ( x ) 
 \end{equation}
projected to the flat band. We first  describe it  on the modified Bloch transform side. 
For $k\in\Omega^*$,  denote the chosen
representative of $[k-\eta]\in\mathbb C/\Lambda^*$  by $ k_\eta \in \Omega^* $, and write
\begin{equation}
\label{eq:s7keta}
   k-\eta=k_\eta+p_\eta(k),\qquad p_\eta(k)\in\Lambda^*.
\end{equation}
Multiplication by $e^{i\langle\eta,x\rangle}$ changes
$e^{-i\langle k,x\rangle}u(x)$ into
$e^{-i\langle k-\eta,x\rangle}u(x)$.  As in \cite[\S A.7]{App} It induces the intrinsic
parent-bundle map
\begin{equation}
 T_\eta(k):V_k\longrightarrow V_{k_\eta},
 \qquad
 T_\eta(k)[k,u]:=[k-\eta,u].
\label{eq:parentt}
\end{equation}
In the representatives chosen in \eqref{eq:s7keta}, $ T_\eta(k)[k,u]
 =[k_\eta,e^{-i\langle p_\eta(k),x\rangle}u]$, or 
equivalently,
\[   ({\mathcal B} M_\eta{\mathcal B}^{-1}s)(k_\eta) =T_\eta(k)s(k). \]

By \eqref{eq:defPk}, the smooth subbundle $ E_+ \oplus E_- \subset V$ defines the
orthogonal projection
\[
\Pi_E:={\mathcal B}^{-1}\Pi(\bullet){\mathcal B}, \ \ \ \
(\Pi(\bullet)s)(k): =\Pi(k)s(k).
\]
The Bloch
representative of $ \Pi_E M_\eta \Pi_E $ is 
\begin{equation}
 L_\eta:={\mathcal B}\Pi_{E}M_\eta\Pi_{E}{\mathcal B}^{-1}
 :L^2(\mathbb C/\Lambda^*;E)\longrightarrow L^2(\mathbb C/\Lambda^*;E).
\label{eq:s7Leta}
\end{equation}
Explicitly, for $ s \in L^2 ( \mathbb C/\Lambda^*; E ) $, we write $ L_\eta $ in terms of a bundle morphism:
\begin{equation}
\label{eq:defLeta}
(L_\eta s)(k_\eta) =\Lambda_\eta(k)s(k),  \ \ \
 \Lambda_\eta(k)
:=\Pi (k_\eta)T_\eta(k)|_{E_k}:
 E_k\longrightarrow E_{k_\eta}.
\end{equation}
A 
coordinate expression in a fundamental cell is given by 
$ \Lambda_\eta ( k )  =\Pi (k_\eta)  e^{-i\langle p_\eta(k),x\rangle}|_{E_k}$.
 We recall the properties of $ \Lambda_\eta ( k ) $: 
\begin{equation}
\label{eq:propLam}
 \Lambda_0(k)=I_{E_k},
 \qquad
 \Lambda_{-\eta}(k_\eta)=\Lambda_\eta(k)^*,
 \qquad
 \|\Lambda_\eta(k)\|\leq1.
\end{equation}
We also note that \(\Lambda_\eta(k)\) depends smoothly on \((k,\eta)\)
in the bundle sense.  Put
\[
 \kappa(k,\eta):=[k-\eta],
 \qquad
 \pi_1(k,\eta):=k.
\]
Then the maps \(T_\eta(k)\) in \eqref{eq:parentt} define a smooth bundle
isomorphism $
 T:\pi_1^*V\longrightarrow\kappa^*V $. 
(This is immediate from the quotient construction of \(V\),
since on the covering space \(T\) is induced by
$  (k,\eta,u)\longmapsto(k-\eta,u). $)
Since \(E\subset V\) and the orthogonal projections \(\Pi_E(k)\) are
smooth, \eqref{eq:defLeta} defines a smooth bundle morphism
\begin{equation}
\label{eq:Lamsm}
 \Lambda:\pi_1^*E\longrightarrow\kappa^*E.
\end{equation}
A representative \(k_\eta\) in a fixed fundamental cell need not
depend smoothly on \((k,\eta)\) but this is only an artefact of the
choice of representatives.

After applying ${\mathcal B}\otimes{\mathcal B}$, 
the interaction restricted to the Bloch subbundle $ E $  is
\begin{equation}
\begin{split}
 V_{E}:={}&
 ({\mathcal B}\otimes{\mathcal B})
 (\Pi_{E}\otimes\Pi_{E})V
 (\Pi_{E}\otimes\Pi_{E})
 ({\mathcal B}^{-1}\otimes{\mathcal B}^{-1})\\
 ={}&\frac1{(2\pi)^d}\int_{\mathbb C}
\widehat V (\eta)L_\eta\otimes L_{-\eta}d\eta,
\end{split}
\label{eq:projV}
\end{equation}
where $ \widehat V ( \eta ) := \int_{\mathbb R^2}
 V(x)e^{-i\langle\eta,x\rangle}d x$ is the Fourier transform of $ V$ -- 
see \cite[(7.14),(7.48)]{App}. 

Using the canonical splitting \eqref{eq:defPk} the above discussion applies to $ E $ replaced by line bundles $ E_\pm $
and in particular we have 
\begin{equation}
\label{eq:Lams}
\Lambda_\eta ( k ) = \begin{pmatrix} \Lambda_{+,\eta} (k) & \ \ 0 \\
\ \ 0 & \Lambda_{-,\eta}  (k) \end{pmatrix} . 
\end{equation}

\subsection{Hartree--Fock functional for flat bands at half-filling}
\label{s:HFmin}
We follow \cite[\S A.8]{App} and take potentials satisfying
\begin{equation}
\begin{gathered}
V \in C ( \mathbb R^2 ) , \ \ \  \widehat V(\eta)\geq0 ,  \ \ \  \widehat V ( \eta) = \mathcal O ( |\eta|^{-1} ) , \\
 \widehat V(\eta)>0 \text{ \ for \ } |\eta|<\varepsilon_V, \ \ \ \varepsilon_V > 0 . 
 \end{gathered} 
 \label{eq:positivity}
\end{equation}
However, we now assume a (possibly) {\em non-uniform} half-filling
 (as was also possible in \cite[\S A.8]{App} -- see \cite{SBL25} for the discrete case), that is 
 a measurable projection valued function, 
 \begin{equation}
 \Pi(k)^2=\Pi(k)=\Pi(k)^*,
 \ \ \ \ | \mathbb C/\Lambda^*|^{-1} \int_{\mathbb C/\Lambda^* } \tr  \Pi(k)= 1. 
\label{eq:P12}
\end{equation}
This is a {\em half-filling} as the projection to the full bundle satisfies $ \tr \Pi_E ( k ) = 2 $. 

Following the physics literature (see \cite[(3),(4), and (C8)--(C12)]{BSRL21}, 
\cite[\S I.3, \S IV.2, (15)--(17)]{F*23} and references given there) we consider the Hartree--Fock functional for a subtraction scheme taking into account interaction ``double counting''.
Its mathematical description is given in greater generality in 
\cite[(3.21)--(3.23)]{App} and here we recall from \cite[(3.31),(3.32)]{App} that we we minimising 
\begin{equation}
\label{eq:HFE} 
\Pi \mapsto E_{\rm{H}} ( \Pi ) + E_{\rm{X}}  ( \Pi ) ,
\end{equation}
where, putting
\begin{equation}
 Q(k):=2\Pi(k)-I_{E_k}, \ \  Q(k)^*=Q(k),\ \  Q(k)^2=\Id_{E_k},
 \ \ 
 |\mathbb C/\Lambda^*|^{-1}\int_{\mathbb C/\Lambda^*}\tr Q(k)\,dk=0. 
 \label{eq:defQ}
\end{equation}
we have
\begin{equation*}
\begin{gathered} 
\cEH(\Pi):=\frac{1}{4(2\pi)^2}|\mathbb C/\Lambda^*|^{-1}
 \sum_{p\in\Lambda^*}\widehat V(p)
 \left|\int_{\mathbb C/\Lambda^*}
 \tr\bigl(\Lambda_p(k)Q(k)\bigr)d k\right|^2, \\
 \cEX(\Pi): =-\frac{1}{4(2\pi)^2}|\mathbb C/\Lambda^*|^{-1}
 \int_{\mathbb C}\widehat V(\eta)
 \int_{\mathbb C/\Lambda^*}
 \tr\!\left(
 \Lambda_\eta(k)Q(k)\Lambda_\eta(k)^*Q(k_\eta)
 \right)d k\,d\eta.
\end{gathered}
\end{equation*}

The first theorem is an adaptation of the results of \cite{BLS25,SBL25} (for simple magic angles) to the
continuous setting. A more general version formulated using discrepancy of Chern numbers of
$ E_\pm $ was presented in \cite[\S A.8]{App} for uniform half-filling. For reader's convenience, we present 
a complete argument in the case of the chiral model of TBG:

\begin{thm}
\label{t:1}
Assume that $\widehat V$ satisfies
\eqref{eq:positivity}. 
If $ E = E_+ \oplus E_- $ where $ E_\pm $ are the flat band line bundles associated to a {\em simple} 
magic angle,  
the only measurable fields \eqref{eq:P12} minimising $ \Pi \to E_{\rm{H}}  ( \Pi ) + E_{\rm{X}} ( \Pi ) $ 
(see \eqref{eq:HFE}) are $\Pi=\Pi_+$ and $\Pi=\Pi_-$, where $ \Pi_\pm $ are the projection onto $ E_\pm $. 
\end{thm}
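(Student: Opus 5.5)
The plan is to follow the standard Becker--Lin--Stubbs strategy: find a lower bound for $\cEH+\cEX$ that is attained at $\Pi_\pm$, and then characterise the equality cases using the Chern numbers \eqref{eq:TBGchern}.

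\emph{Step 1: lower bound.} Since $\widehat V\ge0$, we have $\cEH(\Pi)\ge0$. For the exchange term we use $Q(k)^2=I$, $Q^*=Q$ and $\|\Lambda_\eta(k)\|\le1$ from \eqref{eq:propLam}. Cauchy--Schwarz in the Hilbert--Schmidt inner product then gives
\[
\bigl|\tr\bigl(\Lambda_\eta(k)Q(k)\Lambda_\eta(k)^*Q(k_\eta)\bigr)\bigr|\le \|\Lambda_\eta(k)Q(k)\|_{HS}\,\|Q(k_\eta)\Lambda_\eta(k)\|_{HS}=\|\Lambda_\eta(k)\|_{HS}^2 .
\]
Hence
\[
\cEX(\Pi)\ge -\tfrac{1}{4(2\pi)^2}|\mathbb C/\Lambda^*|^{-1}\int\widehat V(\eta)\int \tr\bigl(\Lambda_\eta(k)\Lambda_\eta(k)^*\bigr)\,dk\,d\eta=:E_0 ,
\]
and $E_0$ does not depend on $\Pi$.

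\emph{Step 2: the bound is attained at $\Pi_\pm$.} For $\Pi=\Pi_\pm$ we have $Q=\pm\operatorname{diag}(1,-1)$. By \eqref{eq:Lams}, $\Lambda_\eta$ is block diagonal and so commutes with $Q$, which gives $\Lambda Q\Lambda^*Q=\Lambda\Lambda^*$. This is equality in the exchange bound. For the Hartree term,
\[
\int\tr\bigl(\Lambda_p Q\bigr)=\pm\int\bigl(A^+_k(p)-A^-_k(p)\bigr)\,dk .
\]
Using $u^*=\mathscr Qu$ from \eqref{eq:TBGsymmetries}, the $E_-$ form factor equals the $E_+$ one, possibly up to conjugation and $k\mapsto -k$. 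The symmetry \eqref{eq:Asymmetry} then gives the neutrality identity $\int \tr(\Lambda_p(k)Q(k))\,dk=0$ for all $p\in\Lambda^*$. Consequently $\cEH(\Pi_\pm)=0$, and $\Pi_\pm$ are minimisers.

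\emph{Step 3: rigidity.} Suppose $\Pi$ is any minimiser. Then equality must hold in Step 1 for a.e.\ $k$ and every $\eta$ with $\widehat V(\eta)>0$, in particular for $|\eta|<\varepsilon_V$. Equality in Cauchy--Schwarz, together with $Q$ being unitary, forces
\[
Q(k_\eta)\Lambda_\eta(k)=\Lambda_\eta(k)Q(k).
\]
Since $\Lambda_\eta(k)$ is close to the identity for small $\eta$ (it is smooth and satisfies $\Lambda_0=I$), $\Lambda_\eta(k)$ is invertible there. So $Q$ is \emph{intertwined} by the near-identity parallel-transport-like maps $\Lambda_\eta$ on small scales.

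From this intertwining I would deduce two things. First, the rank $\tr\Pi(k)$ is essentially constant: it is locally constant in a measure sense and the torus is connected. Second, one of two cases holds.
\begin{itemize}
\item If the rank is $0$ or $2$, this contradicts the average condition \eqref{eq:P12}. So the rank is $1$ a.e.
\item If the rank is $1$, then $\Pi(k)$ is an essentially continuous (indeed smooth) rank-one subbundle $L\subset E$. The continuity comes from the intertwining with the smoothly varying invertible $\Lambda_\eta$.
\end{itemize}

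\emph{Step 4: topology.} This step is where the Chern mismatch enters. I would expand the intertwining to first order in $\eta$: $\Lambda_\eta(k)=I-\eta\cdot\nabla^E+O(\eta^2)$, where $\nabla^E=\Pi_E\, d$ is the natural connection. This shows that $Q$ is parallel, i.e.\ $[\nabla^E,Q]=0$. Hence $L$ and $L^\perp$ are $\nabla^E$-invariant. Since $E_\pm$ are also invariant and $\nabla^E=\nabla^+\oplus\nabla^-$, the following argument applies. Consider the off-diagonal component of $\Pi$, a section of $\operatorname{Hom}(E_+,E_-)\cong E_-\otimes E_+^*$, which has Chern number $2\neq0$. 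This component is parallel, so its norm is constant. If it were nonzero, it would trivialise $E_-\otimes E_+^*$, which is impossible. So it vanishes, $\Pi$ is diagonal, and $\Pi\in\{\Pi_+,\Pi_-\}$.

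\emph{Main obstacle.} The main difficulty is Step 3: passing from a merely measurable $\Pi$ with a.e.\ equality to a smooth parallel subbundle. My first attempt is to mollify. Integrate the intertwining relation against $\widehat V(\eta)$ over a small ball to obtain $Q=\mathcal K Q$ for a smoothing operator $\mathcal K$ built from $\Lambda_\eta$. Elliptic-type regularity then gives continuity of $Q$, after which the rank and Chern arguments apply.
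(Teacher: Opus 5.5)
Your proof is correct. Steps 1--3 are essentially the paper's argument:
\begin{itemize}
\item Your Cauchy--Schwarz bound is equivalent to the identity $\tr(\Lambda Q\Lambda^*Q')=\|\Lambda\|_{\mathcal L_2}^2-\tfrac12\|Q'\Lambda-\Lambda Q\|_{\mathcal L_2}^2$ used there. Equality in your bound is exactly the vanishing of the paper's defect term.
\item The minimisers $\Pi_\pm$ are verified the same way.
\item The paper also regularises, via $\widetilde Q(k)=\int\chi(\eta)\Lambda_\eta(k)^{-1}Q(k_\eta)\Lambda_\eta(k)\,d\eta$, before concluding that the rank is constantly one.
\end{itemize}

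The genuine difference is Step 4. The paper stays at zeroth order. Since $c_1(E_+)\neq c_1(E_-)$, the projections $L\to E_\pm$ cannot both be nowhere zero. So $L_{k_0}=E_{+,k_0}$ or $L_{k_0}=E_{-,k_0}$ at some $k_0$, and the relations $\Lambda_\eta(k)L_k=L_{k_\eta}$, $\Lambda_\eta(k)E_{\pm,k}=E_{\pm,k_\eta}$ make this coincidence set open and closed.

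You instead differentiate the intertwining at $\eta=0$. On the covering space $\Lambda_\eta(k)=\Pi_E(k-\eta)|_{E_k}$, so this does yield $\nabla^EQ=0$ for the connection $\Pi_E\,d$. You then kill the off-diagonal block $\Pi_-\Pi\Pi_+$: it is a parallel, hence constant-norm, section of $\operatorname{Hom}(E_+,E_-)$, whose Chern number is $c_1(E_-)-c_1(E_+)=2$. This is valid, and it is exactly the obstruction the paper later uses for Theorem~\ref{t:2}.

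The trade-off: your route costs $C^1$ regularity and a first-order identification of $\Lambda_\eta$ with Berry transport. The paper's route needs only continuity and no connection. You can keep your off-diagonal idea without differentiating. Block-diagonality gives $B(k_\eta)\Lambda_{+,\eta}(k)=\Lambda_{-,\eta}(k)B(k)$ for $B=\Pi_-\Pi\Pi_+$, so the zero set of $B$ is open and closed directly.

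A few small repairs are needed.
\begin{itemize}
\item Mollify with a smooth cutoff $\chi\geq0$ supported in $|\eta|<\min(\varepsilon_V,\varepsilon_0)$, where $\widehat V>0$ and $\Lambda_\eta$ is invertible. Do not use $\widehat V$ itself, which is only continuous in general. The mechanism is not elliptic regularity: after the change of variables $k'=k_\eta$, this is an integral operator with smooth kernel acting on an $L^\infty$ field.
\item You must show that $E_0$ is finite. Since $\widehat V=\mathcal O(|\eta|^{-1})$ is not integrable on $\mathbb R^2$, this needs the rapid decay $\sup_{k,q}\|\Lambda_{q+p}(k)\|_{\mathcal L_2}\leq C_N\langle p\rangle^{-N}$, obtained by integration by parts in $x$.
\item In Step 2, pin down $A^-_k(p)=\overline{A^+_k(p)}$, that is, $\Lambda_p(k)=\operatorname{diag}(A_k(p),\overline{A_k(p)})$. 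Each of your hedged alternatives would give neutrality, but the computation should actually be carried out.
\end{itemize}
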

\begin{proof}
We start with the elementary identity, used earlier in this context in \cite{SBL25}:
\begin{equation}
 \tr(\Lambda Q\Lambda^*Q')
 =\|\Lambda\|_{\mathcal L_2}^2
 -\tfrac12\|Q'\Lambda-\Lambda Q\|_{\mathcal L_2}^2
 \label{eq:L2identity-new}
\end{equation}
which holds whenever $Q,Q'$ are self-adjoint involutions.  Hence
\begin{equation}
\begin{split}
 \cEX(\Pi)=\cEX^{\min}
 &+\frac{1}{8(2\pi)^2}|\mathbb C/\Lambda^*|^{-1}
 \int_{\mathbb C}\widehat V(\eta)
 \int_{\mathbb C/\Lambda^*}
 \|Q(k_\eta)\Lambda_\eta(k)-\Lambda_\eta(k)Q(k)\|_{\mathcal L_2}^2
 \,dk\,d\eta,
\end{split}
\label{eq:Focksquare-new}
\end{equation}
where
\begin{equation}
 \cEX^{\min}:=-\frac{1}{4(2\pi)^2}|\mathbb C/\Lambda^*|^{-1}
 \int_{\mathbb C}\widehat V(\eta)
 \int_{\mathbb C/\Lambda^*}
 \|\Lambda_\eta(k)\|_{\mathcal L_2}^2\,dk\,d\eta
 \label{eq:Fockmin-new}
\end{equation}
is independent of $\Pi$.  
All the quantities above are finite.  To see this, write
$\eta=q+p$, where $q$ belongs to a fixed fundamental domain of
$\Lambda^*$ and $p\in\Lambda^*$.  In local smooth orthonormal frames
of $E_k$ and $E_{k_q}$, the matrix entries of $\Lambda_{q+p}(k)$ have
the form
\[
 \int_{\mathbb C/\Lambda}
 e^{-i\langle p,x\rangle}a_{ij}(k,q,x)\,dx ,
\]
where $a_{ij}$ is smooth and periodic in $x$.  Since
$\mathbb C/\Lambda^*$ is compact, finitely many such frames suffice,
and all $x$-derivatives of the corresponding functions $a_{ij}$ are
bounded uniformly in $(k,q)$.  Repeated integration by parts therefore
gives, for every $N$,
\begin{equation}
 \sup_{k,q}
 \|\Lambda_{q+p}(k)\|_{\mathcal L_2(E_k,E_{k_q})}
 \leq C_N\langle p\rangle^{-N}.
 \label{eq:rapiddecay-new}
\end{equation}
Together with
$\widehat V(\eta)=\mathcal O(\langle\eta\rangle^{-1})$, this proves
the required absolute convergence.

Let $Q_\pm:=2\Pi_\pm-I_E$.  By \eqref{eq:Lams}, $Q_\pm$ commutes with
the bundle morphisms: 
\begin{equation}
 Q_\pm(k_\eta)\Lambda_\eta(k)=\Lambda_\eta(k)Q_\pm(k).
 \label{eq:componenttransport}
\end{equation}
We now check their Hartree energy.  For $p\in\Lambda^*$,
\eqref{eq:TBGsymmetries}--\eqref{eq:Asymmetry} give
\[
 \Lambda_p(k)=
 \begin{pmatrix}A_k(p)&0\\0&\overline{A_k(p)}\end{pmatrix},
 \qquad
 Q_+(k)=\begin{pmatrix}1&0\\0&-1\end{pmatrix},
\]
and therefore $
 \tr\bigl(\Lambda_p(-k)Q_+(-k)\bigr)
 =-\tr\bigl(\Lambda_p(k)Q_+(k)\bigr) $.
The change of variables $k\mapsto-k$ shows that
\begin{equation}
 \int_{\mathbb C/\Lambda^*}
 \tr\bigl(\Lambda_p(k)Q_+(k)\bigr)\,dk=0.
 \label{eq:neutrality-new}
\end{equation}
Since $Q_-=-Q_+$, the same holds for $Q_-$.  Thus
$\cEH(\Pi_\pm)=0$, and \eqref{eq:componenttransport} and
\eqref{eq:Focksquare-new} show that both $\Pi_+$ and $\Pi_-$ have energy
$\cEX^{\min}$.

Conversely, $\cEH\geq0$ and the second term in
\eqref{eq:Focksquare-new} is nonnegative.  Hence every minimizer must make
both terms vanish. 

We now adapt the proof of \cite[Theorem~8.1]{App} to
the non-uniform filling.
Hence, suppose that $Q$ is a minimizer of the Hartree-Fock energy. Since, $\widehat V>0$ on $|\eta|<\varepsilon_V$, 
Fubini's theorem gives
\begin{equation}
 Q(k_\eta)\Lambda_\eta(k)=\Lambda_\eta(k)Q(k)
 \label{eq:transport-new}
\end{equation}
for almost every $(k,\eta)$ with $|\eta|<\varepsilon_V$.

We next replace the merely measurable $Q$ by a smooth representative.
Since $\Lambda_0(k)=I_{E_k}$, smoothness and compactness give
$\varepsilon_0>0$ such that
\begin{equation}
 \Lambda_\eta(k)^*\Lambda_\eta(k)\geq\tfrac12I_{E_k},
 \qquad |\eta|<\varepsilon_0.
 \label{eq:invertibility-new}
\end{equation}
Put $\varepsilon_*:=\min(\varepsilon_V,\varepsilon_0)$ and choose
$\chi\in C_c^\infty(B(0,\varepsilon_*))$, $\chi\geq0$, with
$\int\chi(\eta)\,d\eta=1$.  Then
\begin{equation}
 \widetilde Q(k):=\int\chi(\eta)\Lambda_\eta(k)^{-1}
 Q(k_\eta)\Lambda_\eta(k)\,d\eta
 \label{eq:mollify-new}
\end{equation}
equals $Q(k)$ for almost every $k$, by \eqref{eq:transport-new}, and is
smooth.  This follows in local trivializations, after the change of
variables $k'=k_\eta$, since \eqref{eq:mollify-new} is then the integral of
the $L^\infty$ field $Q(k')$ against a smooth compactly supported
kernel (see \eqref{eq:Lamsm}).  From now on we use this smooth representative.  The two sides of
\eqref{eq:transport-new} are now smooth in $(k,\eta)$, so that the
identity holds for every $k$ and every $|\eta|<\varepsilon_*$.

We can therefore assume that $\Pi=(Q+I_E)/2$ is a smooth family of
projections.  Its rank is therefore constant on the connected torus
$\mathbb C/\Lambda^*$.  The non-uniform half-filling condition
\eqref{eq:P12}, or equivalently the last identity in \eqref{eq:defQ},
then forces this constant rank to be one.  Hence
\[
 L:=\Ran\Pi\subset E_+\oplus E_-
\]
is a smooth line subbundle.
By \eqref{eq:TBGchern},
$c_1(E_+)\ne c_1(E_-)$.  Thus at some $k_0$ one has
\begin{equation}
 L_{k_0}=E_{+,k_0}\qquad\text{or}\qquad L_{k_0}=E_{-,k_0}.
 \label{eq:onepoint-new}
\end{equation}
Indeed, otherwise the two bundle projections $L\to E_+$ and $L\to E_-$
would be nowhere zero, hence line-bundle isomorphisms, which would imply
$c_1(E_+)=c_1(L)=c_1(E_-)$.

Finally, \eqref{eq:transport-new}, \eqref{eq:invertibility-new}, and
\eqref{eq:componenttransport} imply, for $|\eta|<\varepsilon_*$,
\[
 \Lambda_\eta(k)L_k=L_{k_\eta},
 \qquad
 \Lambda_\eta(k)E_{\pm,k}=E_{\pm,k_\eta}.
\]
Either equality in \eqref{eq:onepoint-new} therefore propagates to a
neighborhood.  Its equality set is also closed, and the torus
$\mathbb C/\Lambda^*$ is connected.  Hence $L=E_+$ or $L=E_-$, that is,
$\Pi=\Pi_+$ or $\Pi=\Pi_-$ almost everywhere.
\end{proof}

\section{Strict coercivity of the Hartree--Fock functional}
\label{s:scHF}

In this section we prove that the functional \eqref{eq:HFE} has a strictly positive Hessian 
at the minimisers obtained in Theorem \ref{t:1}.  Although that theorem allowed non-uniform filling, 
both minimisers have rank one at every $k$.  We therefore
consider the Hessian on the Hilbert manifold of rank-one projection fields.  The proof of
coercivity follows from analysing \eqref{eq:L2identity-new} and a simple topological argument.

\subsection{Tangent space}
\label{s:tang}

For the purposes of this section, choose a unitary trivialisation
\[
        E\simeq (\mathbb C/\Lambda^*)\times\mathbb C^2.
\]
The rank-one projection fields of $H^2$ regularity form a Hilbert manifold, identified in this
trivialisation with
$ H^2(\mathbb C/\Lambda^*;\mathbb S^2) $
by
\begin{equation}
\label{eq:defPk1}
   \Pi_f(k)=\tfrac12\big(I+f(k)\cdot\sigma\big),
   \qquad
   Q_f(k):=2\Pi_f(k)-I=f(k)\cdot\sigma,
\end{equation}
where $\sigma=(\sigma_1,\sigma_2,\sigma_3)$ are the Pauli matrices.  We stress that the
trivialisation is used only to write this formula.  We also stress that $ H^2 $ could be replaced by $ H^r $ for any $ r > 1 $ (so that the resulting functions of $ k $ are continuous) but we fix a Hilbert space giving 
a natural manifold structure to the projections.

The canonical splitting
\begin{equation}
\label{eq:cano}
        E=E_+\oplus E_-
\end{equation}
is {\em not} the fixed coordinate splitting of $\mathbb C^2$ in this trivialisation.

Let $f_+\in\mathcal M$ correspond to $\Pi_+$ in the identification \eqref{eq:defPk1}. Then 
$        Q_+ =2\Pi_+-I=f_+\cdot\sigma$. 
Since $\Pi_+$ is the projection onto $E_+$, the $+1$ and $-1$ eigenspace bundles of
$Q_+$ are $E_+$ and $E_-$,  respectively. Thus the canonical splitting \eqref{eq:cano} is
the eigenspace splitting of the generally non-constant matrix field $Q_+(k)$.

The tangent space at $f_+$ is
\begin{equation}
\label{eq:deftan}
\begin{split}
 \mathcal T_+
 &: =  T_{f_+} \left( H^2 ( \mathbb C/\Lambda^*; \mathbb S^2 ) \right)  
 =\left\{\varphi\in H^2(\mathbb C/\Lambda^*;\mathbb R^3):
 \ \ \forall \, k \ \langle\varphi(k),f_+(k)\rangle_{\mathbb R^3}=0
\right\}                                          \\
 &=H^2(\mathbb C/\Lambda^*;f_+^*T \mathbb S^2).
\end{split}
\end{equation}
(Here $ f_+^* T \mathbb S^2 $ is the pullback for the tangent bundle $ T \mathbb S^2 
\to \mathbb S^2 $ -- see \cite[(2.29)]{TaZw}.)
The orthogonality condition in \eqref{eq:deftan} follows by differentiating
$|f_t(k)|^2=1$.  Conversely, every $\varphi\in\mathcal T_+$ is obtained
 by differentiating a $ C^2 $ curve in
$ t \mapsto f_t \in H^2(\mathbb C/\Lambda^*;\mathbb S^2)$, $ f_0 = f_+ $, at $ t = 0 $. We can 
take for instance, 
\begin{equation}
\label{eq:curvef}
 f_t(k)=\frac{f_+(k)+t\varphi(k)}
 {(1+t^2|\varphi(k)|^2)^{1/2}}.
\end{equation}
For $\varphi\in\mathcal T_+$ put
\[
        X_\varphi(k):=\varphi(k)\cdot\sigma.
\]
Then
\begin{equation}
 X_\varphi^*=X_\varphi,\ \ \ 
 Q_+X_\varphi+X_\varphi Q_+=0,\ \ \ 
 \|X_\varphi(k)\|_{\mathcal L_2}^2=2|\varphi(k)|^2.
 \label{eq:Xphinorm}
\end{equation}
The anticommutation identity in \eqref{eq:Xphinorm} says exactly that
$X_\varphi$ is off-diagonal with respect to the canonical splitting
$E=E_+\oplus E_-$.  Indeed, since
\[
 \Pi_+=\tfrac12(I+Q_+), \ \ \ 
 \Pi_-=\tfrac12(I-Q_+),
\]
the identity $Q_+X_\varphi+X_\varphi Q_+=0$ and  $Q_-=-Q_+$ give 
$  \Pi_+X_\varphi\Pi_+ =  \Pi_-X_\varphi\Pi_-=0$.
Consequently,
\[
 X_\varphi=\psi_\varphi+\psi_\varphi^*, \ \ \ 
 \psi_\varphi:=\Pi_-X_\varphi\Pi_+
 \in H^2\big(\mathbb C/\Lambda^*;\operatorname{Hom}(E_+,E_-)\big).
\]

Thus $f_+^*T\mathbb S^2$ is identified, as a real rank-two bundle, with the underlying real bundle
of $\operatorname{Hom}(E_+,E_-)$.  In a local unitary frame adapted to the canonical
splitting, rather than in the fixed trivialisation used to define the Pauli matrices, this says
simply that
\[
 Q_+(k)=
 \begin{pmatrix}1& \ \ 0\\ 0&-1\end{pmatrix},\ \ \ \ 
 X_\varphi(k)=
 \begin{pmatrix}0&\overline{z_\varphi(k)}\\z_\varphi(k)&0\end{pmatrix}.
\]
The scalar $z_\varphi$ is only a local coordinate; globally the lower-left entry is the section
$\psi_\varphi$ above.

\subsection{The Hessian at the minimisers}
We now prove coercivity on the $L^2$ completion of $ \mathcal T_+ $ in \eqref{eq:deftan}
$ L^2(\mathbb C/\Lambda^*;f_+^*T\mathbb S^2) $. 
For $\varphi\in\mathcal T_+$, choose a $C^2$ curve
\[
 t\longmapsto f_t\in H^2(\mathbb C/\Lambda^*;\mathbb S^2),
 \qquad f_0=f_+,
 \qquad \partial_tf_t|_{t=0}=\varphi,
\]
for instance the curve in \eqref{eq:curvef}.  Using that we define the Hessian as the quadratic form
\begin{equation}
 \operatorname{Hess}_{\Pi_+}\cEHF(\varphi,\varphi)
 :=\left.\frac{d^2}{dt^2}\right|_{t=0}\cEHF(\Pi_{f_t}), \ \ \ \Pi_+ = \Pi_{f_+} . 
 \label{eq:hessdef}
\end{equation}
As in finite dimensions the Hessian is independent of the choice of $ f_t $ as
$ \Pi_{f_+} $ is critical for $ \cEHF ( \Pi_{f_t } ) $.  Our main result is the strict coercivity of this
Hessian:

\begin{thm}
\label{t:2}
In the setting of Theorem~\ref{t:1}, the Hessian of $\cEHF$ restricted to the rank-one
Hilbert manifold at $\Pi_+$ extends from $\mathcal T_+$ to a bounded quadratic form on
$\mathcal H_+$.  There exists $c_+>0$ such that
\begin{equation}
\operatorname{Hess}_{\Pi_+}\cEHF(\varphi,\varphi)
 \geq c_+\int_{\mathbb C/\Lambda^*}|\varphi(k)|^2\,dk,
 \qquad \varphi\in \mathcal H_+ :=   L^2(\mathbb C/\Lambda^*;f_+^*T\mathbb S^2).
 \label{eq:hesscoercive}
\end{equation}
The analogous statement holds at $\Pi_-$.
\end{thm}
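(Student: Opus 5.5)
The plan is to compute the Hessian explicitly from the sum-of-squares representation \eqref{eq:Focksquare-new} of the exchange term. I will show that it is a nonnegative quadratic form of the type ``positive multiplication operator plus compact operator'', and then rule out a kernel using the Chern number mismatch \eqref{eq:TBGchern}, exactly as in the last part of the proof of Theorem~\ref{t:1}. I work at $\Pi_+$; the case $\Pi_-$ is identical with the roles of $E_\pm$ exchanged.

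\emph{Step 1: computing the Hessian.} Along the curve \eqref{eq:curvef} we have $Q_{f_t}=Q_++tX_\varphi+O(t^2)$ in $L^\infty$. For the Hartree term, $\cEH(\Pi_{f_t})=c\sum_p\widehat V(p)|a_p(t)|^2$ with $a_p(0)=0$ by \eqref{eq:neutrality-new}. Hence its second derivative is $2c\sum_p\widehat V(p)|a_p'(0)|^2$, where $a_p'(0)=\int\tr(\Lambda_p(k)X_\varphi(k))\,dk$. This vanishes, because $\Lambda_p$ is diagonal by \eqref{eq:Lams} while $X_\varphi$ is off-diagonal for the splitting \eqref{eq:cano}. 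For the exchange term I use \eqref{eq:Focksquare-new}. The integrand $F_t(k,\eta)=Q_{f_t}(k_\eta)\Lambda_\eta(k)-\Lambda_\eta(k)Q_{f_t}(k)$ vanishes at $t=0$ by \eqref{eq:componenttransport}, so $\frac{d^2}{dt^2}\|F_t\|^2|_{t=0}=2\|\partial_tF_0\|^2$; the $O(t^2)$ term of $Q_{f_t}$ is irrelevant. Writing $X_\varphi=\psi+\psi^*$ with $\psi=\psi_\varphi$ and using \eqref{eq:Lams}, I expect
\[
\operatorname{Hess}_{\Pi_+}\cEHF(\varphi,\varphi)=C\int_{\mathbb C}\widehat V(\eta)\int_{\mathbb C/\Lambda^*}\bigl|\psi(k_\eta)\Lambda_{+,\eta}(k)-\Lambda_{-,\eta}(k)\psi(k)\bigr|^2\,dk\,d\eta ,
\]
with $C>0$ explicit; the factor $2$ from the two off-diagonal blocks is absorbed into $C$. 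Boundedness on $\mathcal H_+$, and hence the extension from $\mathcal T_+$ by density, follows from $|\Lambda_{\pm,\eta}(k)\psi|\le\|\Lambda_\eta(k)\|_{\mathcal L_2}|\psi|$, the rapid decay \eqref{eq:rapiddecay-new}, and $\widehat V=\mathcal O(\langle\eta\rangle^{-1})$. These give $\int\widehat V(\eta)\sup_k\|\Lambda_\eta(k)\|^2_{\mathcal L_2}d\eta<\infty$; note that $\widehat V$ alone is not integrable, so the decay of $\Lambda_\eta$ is essential here.

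\emph{Step 2: Fredholm structure.} Expanding the square writes the form as $\langle A\psi,\psi\rangle$ with $A=m+K$. Here $m(k)=C\int\widehat V(\eta)\bigl(|\Lambda_{-,\eta}(k)|^2+|\Lambda_{+,-\eta}(k_{-\eta})|^2\bigr)d\eta$, the second term arising after the change of variables $k'=k_\eta$. This $m$ is continuous and strictly positive, since $\Lambda_0=I$ and $\widehat V>0$ near $0$, so $m\ge m_0>0$. The cross term $K$ is an integral operator on $L^2(\mathbb C/\Lambda^*;\operatorname{Hom}(E_+,E_-))$. After writing $\eta=k-k'+p$, its kernel is $\sum_{p\in\Lambda^*}\widehat V(k-k'+p)\times(\text{products of }\Lambda_{\pm})$. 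This sum converges to a bounded measurable kernel by \eqref{eq:rapiddecay-new} and boundedness of $\widehat V$, so $K$ is Hilbert--Schmidt and hence compact. Thus $A\ge0$ is self-adjoint and Fredholm of index zero, and it suffices to show $\ker A=0$: then $A$ is invertible, and positivity gives $A\ge c_+$.

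\emph{Step 3: no kernel (the main point).} Suppose $\langle A\psi,\psi\rangle=0$. Then $\psi(k_\eta)\Lambda_{+,\eta}(k)=\Lambda_{-,\eta}(k)\psi(k)$ for a.e.\ $(k,\eta)$ with $|\eta|<\varepsilon_V$. Mollifying as in \eqref{eq:mollify-new}, i.e.\ taking $\int\chi(\eta)\Lambda_{-,\eta}(k)^{-1}\psi(k_\eta)\Lambda_{+,\eta}(k)\,d\eta$ and using \eqref{eq:invertibility-new}, produces a smooth representative for which the identity holds everywhere for $|\eta|<\varepsilon_*$. Invertibility of $\Lambda_{\pm,\eta}(k)$ for small $\eta$ then shows that the zero set of $\psi$ is open; it is also closed, so by connectedness either $\psi\equiv0$ or $\psi$ vanishes nowhere. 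In the latter case $\psi$ is a nowhere-vanishing section of $E_-\otimes E_+^*$. This is impossible, since by \eqref{eq:TBGchern} $c_1(E_-\otimes E_+^*)=c_1(E_-)-c_1(E_+)=2\neq0$. Hence $\psi=0$, so $\varphi=0$. The step I expect to need the most care is Step~2, namely verifying compactness of the cross term, including the bookkeeping of the representatives $k_\eta$ and of the bundle maps in the kernel. If Hilbert--Schmidt bounds are awkward, I would instead show that $K$ is a norm limit of operators with smooth kernels by truncating the sum to $|p|\le R$ and using \eqref{eq:rapiddecay-new}.
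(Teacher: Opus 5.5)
Your argument is correct. Its topological core is exactly the paper's: a null vector would be a nowhere-vanishing section of $\operatorname{Hom}(E_+,E_-)$, whose Chern number is $2$. The difference lies in how compactness and regularity are obtained.

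The paper never analyses the exact Hessian. It keeps only small $|\eta|$ and uses the polar decomposition \eqref{eq:polar} together with the inequality \eqref{eq:matrixineq}. This bounds the integrand below by $|\varphi(k_\eta)-R_\eta(k)\varphi(k)|^2$ with $R_\eta(k)\in SO(3)$. The problem then reduces to $q(\varphi)=2c_\chi\|\varphi\|^2-2\operatorname{Re}\langle K_\chi\varphi,\varphi\rangle$, built from a genuinely unitary transport, where $\|K_\chi\|\le c_\chi$ and $K_\chi$ is smoothing. A weak-compactness argument then produces a unit vector with $K_\chi\varphi=c_\chi\varphi$, which is automatically smooth and parallel.

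You instead write the full Hessian on $L^2(\mathbb C/\Lambda^*;\operatorname{Hom}(E_+,E_-))$ as multiplication by $m\ge m_0>0$ plus a compact cross term. Coercivity then reduces to injectivity, since the operator is Fredholm of index zero. Regularity of a null vector comes from the mollification of Theorem~\ref{t:1} rather than from a smoothing operator.

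What each route buys:
\begin{itemize}
\item The paper's route needs no lattice sums in the compactness step, because the kernel of $K_\chi$ is smooth when $\chi$ has small support. It also isolates the geometric content: a would-be null vector is parallel for an isometric transport on $f_+^*T\mathbb S^2$.
\item Your route dispenses with the polar decomposition and \eqref{eq:matrixineq}, and works with the exact operator. It also gives more information: by Weyl's theorem the essential spectrum of the Hessian is the range of $m$, so the optimal $c_+$ is either $\min m$ or a discrete eigenvalue below it.
\end{itemize}

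Two bookkeeping points, neither of which affects the argument:
\begin{itemize}
\item Merging the two off-diagonal blocks into one is not automatic. It requires the measure-preserving change of variables $(k,\eta)\mapsto(k_\eta,-\eta)$, together with $\Lambda_{-\eta}(k_\eta)=\Lambda_\eta(k)^*$ and evenness of $\widehat V$. Evenness holds for real $V$; otherwise symmetrise $\widehat V$.
\item The second term of $m$ should be $|\Lambda_{+,\eta}(k_{-\eta})|^2=|\Lambda_{+,-\eta}(k)|^2$, not $|\Lambda_{+,-\eta}(k_{-\eta})|^2$. The first term alone already gives $m\ge m_0$.
\end{itemize}
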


\begin{proof}

By \eqref{eq:neutrality-new},
$\cEH(\Pi_+)=0$, while $\cEH\geq0$.  Hence the second variation of
$\cEH$ at $\Pi_+$ along the rank-one Hilbert manifold is nonnegative, and it is enough
to prove a coercive lower bound for the exchange Hessian.

The square in \eqref{eq:Focksquare-new} vanishes at $Q_+$, and differentiation gives
the following expression for \eqref{eq:hessdef}:
\begin{equation}
\frac{1}{4(2\pi)^2}|\mathbb C/\Lambda^*|^{-1}
 \int_{\mathbb C}\widehat V(\eta)
 \int_{\mathbb C/\Lambda^*}
 \big\|X_\varphi(k_\eta)\Lambda_\eta(k)
       -\Lambda_\eta(k)X_\varphi(k)\big\|_{\mathcal L_2}^2
 \,dk\,d\eta.
\label{eq:hessformula}
\end{equation}
Thus the exchange Hessian has the same transfer-defect form as the positive part of the
exchange energy.  This is because that part of the energy is the square of an expression
which is linear in $Q$ and vanishes at the minimiser.  The same estimates as in
\eqref{eq:rapiddecay-new} show that the right hand side of \eqref{eq:hessformula}
extends to a bounded quadratic form on $ \mathcal H_+ := L^2(\mathbb C/\Lambda^*;f_+^*T\mathbb S^2) $, 
and
\begin{equation}
 \operatorname{Hess}_{\Pi_+}\cEHF(\varphi,\varphi)
 \geq \operatorname{Hess}_{\Pi_+}\cEX(\varphi,\varphi),
 \qquad \varphi\in\mathcal T_+.
 \label{eq:HFgeqX}
\end{equation}

We recall from \eqref{eq:invertibility-new} that $ 
 \Lambda_\eta(k)^*\Lambda_\eta(k)\geq\tfrac12 I$,
for $ |\eta|<\epsilon $,
For those values,  let
\begin{equation}
 \Lambda_\eta(k)=U_\eta(k)S_\eta(k),\ \ \ 
 S_\eta(k):=\bigl(\Lambda_\eta(k)^*\Lambda_\eta(k)\bigr)^{1/2},
 \label{eq:polar}
\end{equation}
be its polar decomposition.  Then $U_\eta(k):E_k\to E_{k_\eta}$ is unitary and
$S_\eta(k)\geq 2^{-1/2}I$.  Since $\Lambda_\eta(k)$ preserves the canonical splitting,
$S_\eta(k)$ preserves $E_{+,k}$ and $E_{-,k}$, while $U_\eta(k)$ maps these spaces
unitarily onto $E_{+,k_\eta}$ and $E_{-,k_\eta}$, respectively.  Both factors depend
smoothly on $(k,\eta)$.

The identity \eqref{eq:componenttransport} at $Q_+$ and the polar decomposition give
\[
 Q_+(k_\eta)U_\eta(k)=U_\eta(k)Q_+(k).
\]
Indeed, $S_\eta(k)$ commutes with $Q_+(k)$ and is invertible.  Conjugation by
$U_\eta(k)$ defines $R_\eta(k)\in SO(3)$ by
\begin{equation}
 U_\eta(k)(v\cdot\sigma)U_\eta(k)^*
 =\bigl(R_\eta(k)v\bigr)\cdot\sigma,
 \qquad v\in\mathbb R^3.
 \label{eq:Retadef}
\end{equation}
Since $Q_+(k)=f_+(k)\cdot\sigma$, the intertwining identity gives
\[
        R_\eta(k)f_+(k)=f_+(k_\eta).
\]
Consequently,
\[
 R_\eta(k):T_{f_+(k)}\mathbb S^2\longrightarrow T_{f_+(k_\eta)}\mathbb S^2
\]
is an isometry, depending smoothly on $(k,\eta)$, and
$R_{-\eta}(k_\eta)=R_\eta(k)^{-1}$.

In particular,
$
 U_\eta(k)^*X_{\varphi(k_\eta)}U_\eta(k)
 =X_{R_\eta(k)^{-1}\varphi(k_\eta)}$. 
Since $R_\eta(k)^{-1}\varphi(k_\eta)$ and $\varphi(k)$ both belong to
$T_{f_+(k)}\mathbb S^2$, the two self-adjoint matrices
\[
 A:=U_\eta(k)^*X_{\varphi(k_\eta)}U_\eta(k),\ \ \ 
 B:=X_{\varphi(k)}
\]
are off-diagonal with respect to the same canonical splitting
$E_k=E_{+,k}\oplus E_{-,k}$.  

We now use the following elementary
Hilbert--Schmidt norm estimate.  In a local frame adapted to this splitting, write
$S=\operatorname{diag}(s_+,s_-)$ and represent the off-diagonal self-adjoint
matrices $A,B$ by $a,b\in\mathbb C$.  Then
\[
    \|AS-SB\|_{\mathcal L_2}^2
    =2s_+s_-|a-b|^2+(s_+-s_-)^2(|a|^2+|b|^2), 
\]
and since $\|A-B\|_{\mathcal L_2}^{2} = 2 | a - b |^{2}$, 
\begin{equation}
 S\geq\gamma I\ \Longrightarrow\ 
 \|AS-SB\|_{\mathcal L_2}\geq\gamma\|A-B\|_{\mathcal L_2}.
 \label{eq:matrixineq}
\end{equation}
(For related inequalities in greater generality, see for instance,
\cite[Chapter~VIII.3]{Bh97}.)

Applying \eqref{eq:matrixineq} with $S=S_\eta(k)$ and using
\eqref{eq:Xphinorm}, we obtain (note that \eqref{eq:Xphinorm} has a factor of $ 2 $)
\begin{equation}
 \big\|X_\varphi(k_\eta)\Lambda_\eta(k)
       -\Lambda_\eta(k)X_\varphi(k)\big\|_{\mathcal L_2}^2
 \geq
 \big|\varphi(k_\eta)-R_\eta(k)\varphi(k)\big|^2.
 \label{eq:covdiffbound}
\end{equation}
To analyse the right hand side, we 
define the unitary operator $\mathcal U_\eta$ on $\mathcal H_+$ by
\begin{equation}
 (\mathcal U_\eta\varphi)(k_\eta):=R_\eta(k)\varphi(k).
 \label{eq:Ueta}
\end{equation}
The identity $\Lambda_{-\eta}(k_\eta)=\Lambda_\eta(k)^*$ and uniqueness of the polar
decomposition give $\mathcal U_{-\eta}=\mathcal U_\eta^*$.  Shrinking $\epsilon$ below the
injectivity radius of the torus if necessary, choose an even function
$\chi\in C_c^\infty(B(0,\epsilon))$, $\chi\geq0$, which is positive near $0$, and put
\begin{equation}
 q(\varphi):=\int_{\mathbb C}\chi(\eta)
 \|\mathcal U_\eta\varphi-\varphi\|_{\mathcal H_+}^2\,d\eta.
 \label{eq:qdef}
\end{equation}
It follows from \eqref{eq:HFgeqX}, \eqref{eq:hessformula}, and \eqref{eq:covdiffbound} that
\begin{equation}
 \operatorname{Hess}_{\Pi_+}\cEHF(\varphi,\varphi)\geq c_0q(\varphi)
 \label{eq:hessq}
\end{equation}
for some $c_0>0$.  It remains to prove that
\begin{equation}
 q(\varphi)\geq c_1\|\varphi\|_{\mathcal H_+}^2,\ \ \ c_1>0.
 \label{eq:qcoer}
\end{equation}
For that we define
\begin{equation}
 c_\chi:=\int_{\mathbb C}\chi(\eta)\,d\eta>0,\ \ \ 
 K_\chi:=\int_{\mathbb C}\chi(\eta)\mathcal U_\eta\,d\eta:
  \mathcal H_+  \longrightarrow  \mathcal H_+.
 \label{eq:Kchi}
\end{equation}
Since $\mathcal U_\eta$ is unitary and $\chi$ is even,
$K_\chi=K_\chi^*$, $\|K_\chi\|\leq c_\chi$, and
\begin{equation}
 q(\varphi)=2c_\chi\|\varphi\|_{\mathcal H_+ }^2
 -2 \Re \, \langle K_\chi\varphi,\varphi\rangle_{\mathcal H_+}.
 \label{eq:qKchi}
\end{equation}
The operator $K_\chi$ is smoothing, hence compact.  Indeed, after the change of variables
$k'=k_\eta$, it is an integral operator with a smooth bundle kernel, since
$(k,\eta)\mapsto R_\eta(k)$ is smooth and $\chi$ is smooth and supported in a sufficiently
small neighborhood of $0$.

Suppose that \eqref{eq:qcoer} were false.  We could then find
$\varphi_j\in\mathcal H_+$ such that
\[
 \|\varphi_j\|_{\mathcal H_+}=1,\ \ \ q(\varphi_j)\longrightarrow0.
\]
After passage to a subsequence, $\varphi_j\rightharpoonup\varphi$ weakly in
$\mathcal H_+$, with $\|\varphi\|_{\mathcal H_+}\leq1$.  Compactness gives
$K_\chi\varphi_j\to K_\chi\varphi$ strongly, and \eqref{eq:qKchi} therefore yields
\[
      \Re \, \langle K_\chi\varphi,\varphi\rangle_{\mathcal H_+}=c_\chi.
\]
Since $\|K_\chi\|\leq c_\chi$, this implies
$\|\varphi\|_{\mathcal H_+}=1$.  Hence $\varphi_j\to\varphi$ strongly and
$q(\varphi)=0$.

The nonnegative operator $c_\chi I-K_\chi$ therefore annihilates $\varphi$.  Since
$K_\chi$ is smoothing, $\varphi$ is smooth.  From \eqref{eq:qdef} and the fact that
$\chi$ is positive near $0$, we obtain
\begin{equation}
 \varphi(k_\eta)=R_\eta(k)\varphi(k),\ \ \ 
 k\in\mathbb C/\Lambda^*,\ \ \ |\eta|\ll1.
 \label{eq:parallelphi}
\end{equation}
The zero set of $\varphi$ is consequently both open and closed.  Since the torus is
connected and $\|\varphi\|_{\mathcal H_+}=1$, the section $\varphi$ is nowhere zero.  On
the other hand,
\[
 f_+^*T\mathbb S^2\simeq\operatorname{Hom}(E_+,E_-),\ \ \ 
 c_1\bigl(\operatorname{Hom}(E_+,E_-)\bigr)=2.
\]
Thus $f_+^*T\mathbb S^2$ has no nowhere-vanishing sections, which gives a contradiction.  Hence
\eqref{eq:qcoer} holds, and \eqref{eq:hessq} proves \eqref{eq:hesscoercive}.

The assertion at $\Pi_-$ follows by interchanging $E_+$ and $E_-$.  Equivalently,
$Q_-=-Q_+$ and $f_-=-f_+$, so the same tangent planes and the same argument apply; the
relevant complex line bundle is $\operatorname{Hom}(E_-,E_+)$, whose Chern number is
$-2$. 
\end{proof}

\noindent
{\bf Remark.} 
Although only positivity of $\cEH$ was needed in the proof, its second variation at the
two minimisers actually vanishes.  Indeed, put
\[
 F_p(Q):=\int_{\mathbb C/\Lambda^*}
 \tr\bigl(\Lambda_p(k)Q(k)\bigr)\,dk.
\]
Then $F_p(Q_+)=0$ by \eqref{eq:neutrality-new}.  If $Q_t$ is a curve through $Q_+$
with $\partial_tQ_t|_{t=0}=X_\varphi$, then
\[
 \left.\frac{d}{dt}\right|_{t=0}F_p(Q_t)
 =\int_{\mathbb C/\Lambda^*}
 \tr\bigl(\Lambda_p(k)X_\varphi(k)\bigr)\,dk=0,
\]
since $\Lambda_p(k)$ is diagonal and $X_\varphi(k)$ is off-diagonal with respect
to $E_+\oplus E_-$.  Hence
\[
 \left.\frac{d^2}{dt^2}\right|_{t=0}|F_p(Q_t)|^2=0,
\]
and consequently, using the same absolute convergence estimates as above,
\[
 \operatorname{Hess}_{\Pi_+}\cEH(\varphi,\varphi)=0.
\]
Thus the full Hartree--Fock Hessian is exactly the exchange Hessian in \eqref{eq:hessformula}.

\end{document}